\theoremstyle{plain}
\newtheorem{theorem}{Theorem}
\newtheorem{proposition}{Proposition}
\newtheorem{lemma}{Lemma}
\newtheorem{conjecture}{Conjecture}
\let\@fnsymbol\@arabic
\title{%
  \MakeUppercase{Weight Balancing on Boundaries}%
}
\author {
  Luis~Barba,%
  \thanks{\affil{Institute of Theoretical Computer Science, ETH Z\"urich, Z\"urich, Switzerland},
          \email{luis.barba@inf.ethz.ch}}\,
  Otfried~Cheong,%
  \thanks{\affil{School of Computing, KAIST, Daejeon, Korea},
          \email{otfried@kaist.airpost.net}}\,
  Michael~Gene~Dobbins,%
  \thanks{\affil{Department of Mathematics, Binghamton University, Binghamton, NY, USA},
          \email{michaelgenedobbins@gmail.com}}\,
  Rudolf~Fleischer,%
  \thanks{\affil{Department of Computer Science, Heinrich Heine University D\"usseldorf, Germany},
    \email{rudolf.fleischer@gmail.com}}\,
  Akitoshi~Kawamura,%
  \thanks{\affil{Research Institute for Mathematical Sciences, Kyoto University, Kyoto, Japan},
          \email{kawamura@kurims.kyoto-u.ac.jp}}\,
  Matias~Korman,%
  \thanks{\affil{Siemens Electronic Design Automation, USA},
          \email{matias\_korman@mentor.com}}\,
  Yoshio~Okamoto,%
  \thanks{\affil{Department of Computer and Network Engineering, University of Electro-Communications, Tokyo, Japan},
          \email{okamotoy@uec.ac.jp}}\,
  J\'{a}nos~Pach,%
  \thanks{\affil{R\'enyi Institute, Budapest, Hungary and Moscow Institute of Physics and Technology, Moscow, Russia},
          \email{janos.pach@epfl.ch}}\,
  Yuan~Tang,%
  \thanks{\affil{Software School, Shanghai Key Laboratory of Intelligent Information Processing, Fudan University, Shanghai, P. R. China},
          \email{yuantang@fudan.edu.cn}}\,
  Takeshi~Tokuyama,%
  \thanks{\affil{School of Engineering, Kwansei Gakuen University,  Japan},
    \email{tokuyama@kwansei.ac.jp}}\quad 
  and Sander~Verdonschot\,%
  \thanks{\affil{Shopify, Ottawa, Ontario, Canada. The research for this paper was
performed while at the School of Computer Science, Carleton
University, Ottawa, Ontario, Canada},
    \email{sander@cg.scs.carleton.ca}}
}
\renewcommand{\figurename}{Figure}
\newcommand{\Rset}{\mathbb R}
\newcommand{\Sset}{\mathbb S}
\newcommand{\bd}{\partial}
\newcommand{\conv}{\mathop{\mathrm{conv}}}
\newcommand{\PARTITION}{\textsc{partition}\xspace}
\begin{document}
\maketitle

\begin{abstract}
Given a polygonal region containing a target point (which we assume is the origin), 
it is not hard to see that there are two points on the perimeter
that are antipodal, that is, whose midpoint is the origin. 
We prove three generalizations of this fact. 
(1)~For any polygon (or any compact planar set) containing the origin, 
it is possible to place a given set of weights on the boundary 
so that their barycenter (center of mass) coincides with the origin, 
provided that the largest weight does not exceed the sum of the other weights. 
(2)~On the boundary of any $3$-dimensional compact set containing the origin, 
there exist three points that form an equilateral triangle centered at the origin. 
(3)~For any $d$-dimensional bounded convex polyhedron containing the origin, 
there exists a pair of antipodal points 
consisting of a point on a $\lfloor d/2 \rfloor$-face and a point on a~$\lceil d/2\rceil$-face.
\end{abstract}

\section{Introduction}
\label{sec_introduction}

We will discuss three generalizations of the following observation (in
this paper, a polygon or a polyhedron is always closed and bounded).

\addtocounter{theorem}{-1}
\begin{theorem}
\label{theorem: antipodal}
On the perimeter of any polygon 
containing the origin, 
there are two points that are \emph{antipodal}, that is,
whose midpoint is the origin. 
\end{theorem}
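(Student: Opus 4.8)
The plan is to recast the statement as an intersection problem under point reflection. Let $P$ denote the polygonal region and let $-P = \{-x : x \in P\}$ be its reflection through the origin. A point $p \in \bd P$ is antipodal to some boundary point exactly when $-p \in \bd P$, and $-p \in \bd P$ happens precisely when $p \in \bd(-P)$. Hence the whole theorem reduces to showing that the two boundaries meet, i.e.\ $\bd P \cap \bd(-P) \neq \emptyset$: any intersection point $p$ immediately yields the antipodal pair $p, -p \in \bd P$, whose midpoint is the origin.

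First I would dispose of the degenerate case in which the origin lies on $\bd P$ (then $p = 0$ already works), so that I may assume the origin lies in the interior of $P$. Since $P$ is a polygon, $\bd P$ is a simple closed (polygonal) curve, and so is $\bd(-P)$. By the Jordan curve theorem each of them separates the plane into a bounded interior and an unbounded exterior, and by construction the origin lies in the interior of both.

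Next comes the core topological step. Suppose, for contradiction, that $\bd P$ and $\bd(-P)$ are disjoint. Since $\bd(-P)$ is connected and avoids $\bd P$, it lies entirely in the interior or entirely in the exterior of $\bd P$; because both curves surround the origin, a short argument shows that one of the two enclosed regions must contain the other, say $-P \subseteq P$. Applying the reflection $x \mapsto -x$ to this inclusion gives $P \subseteq -P$, whence $P = -P$ and therefore $\bd P = \bd(-P)$, contradicting disjointness. Thus the boundaries intersect, which completes the proof.

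The step I expect to require the most care is the nesting dichotomy: turning ``two disjoint Jordan curves, each enclosing the origin'' into the clean inclusion $-P \subseteq P$ (or $P \subseteq -P$). This needs the Jordan curve theorem together with the fact that the interiors are connected and both contain the origin, in order to rule out the configuration in which the two curves sit ``side by side'' rather than nested. I note that the argument never uses convexity or star-shapedness and only relies on $\bd P$ being a single Jordan curve, so it also establishes the stated generalization to any bounded closed region with connected boundary; for polygons specifically one could instead give a direct intermediate-value argument, but the reflection approach seems both shorter and more general.
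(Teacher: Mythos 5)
Your proof is correct and follows essentially the same route as the paper's: consider the reflected copy $-P$, argue that $\bd P$ and $\bd(-P)$ must meet, and take the intersection point together with its reflection as the antipodal pair. The paper compresses your nesting-plus-symmetry contradiction into the single remark that $P$ and $-P$ cannot be properly contained in one another; your version just spells out, via the Jordan curve theorem, why disjoint boundaries would force such a containment.
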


In other words, we have 
\begin{equation*}
2 P \subseteq \bd P \oplus \bd P 
\end{equation*}
for any polygon~$P$, 
where 
$\bd P$ denotes its boundary, 
$A \oplus B = \{\, x + y \mid x \in A, \ y \in B \,\}$ 
is the Minkowski sum of regions $A$ and $B$, 
and 
$\alpha A = \{\, \alpha x \mid x \in A \,\}$ is the copy of $A$ 
scaled (about the origin) by a real number~$\alpha$. 

\begin{proof}[Proof of Theorem~\ref{theorem: antipodal}]
Consider $-P$, the copy of the given polygon~$P$ reflected about the origin. 
Since $P$ and $-P$ cannot be properly contained in the other 
(and they both contain the origin), 
their boundaries intersect 
at some point $q \in \bd P \cap (- \bd P)$. 
Then $q$ and $-q$ form the desired pair of points. 
\end{proof}

\subsection*{Distinct weights}

An interpretation of Theorem~\ref{theorem: antipodal} is that 
we can put two equal weights on the perimeter
and balance them about the origin. 
Generalizing this to different sets of weights, 
we prove the following 
in Section~\ref{sec_2D}
(note that this subsumes Theorem~\ref{theorem: antipodal}). 

\begin{theorem}
\label{theorem: different weights}
Suppose that $k$ weights $w _1 \geq w _2 \geq \dots \geq w _k$ satisfy 
$w _1 \leq w _2 + \dots + w _k$. 
Then for any polygon 
(or any compact set)
$P \subseteq \Rset ^2$
containing the origin, 
the weights can be placed on the boundary~$\bd P$
so that their center of mass is the origin. 
\end{theorem}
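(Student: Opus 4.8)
The plan is to restate the goal in the Minkowski-sum language of the excerpt and then attack it by induction on $k$, using Theorem~\ref{theorem: antipodal} as the base case. Placing weight $w_i$ at a boundary point $p_i$ makes the center of mass vanish exactly when $\sum_i w_i p_i = 0$, so the theorem is equivalent to the assertion that
\begin{equation*}
0 \in M := w_1 \bd P \oplus w_2 \bd P \oplus \dots \oplus w_k \bd P,
\end{equation*}
where $w_i \bd P = \{\, w_i p \mid p \in \bd P \,\}$. For $k = 2$ the hypothesis forces $w_1 = w_2$ and the claim is precisely Theorem~\ref{theorem: antipodal}. I may assume all weights are positive (dropping any zero weights) and, as a first reduction, that the origin lies in the interior of $P$; the boundary case should follow by a limiting argument, approximating by regions whose interior contains the origin.

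Two observations frame the induction. First, since $\bd P$ is connected and the origin is interior, the orthogonal projection of $\bd P$ onto any line through the origin is an interval straddling $0$; summing, the projection of $M$ onto every direction is an interval containing $0$, so $0 \in \conv M$. This is the ``easy half'': the origin cannot be separated from $M$ by a line. The difficulty is that $M$ is a two-dimensional region that may have a hole, and I must rule out the origin sitting in such a hole. Here the hypothesis $w_1 \le w_2 + \dots + w_k$ enters: for the unit disk one computes that $M$ is an annulus of inner radius proportional to $\max(0,\, w_1 - (w_2 + \dots + w_k))$, which collapses to $0$ exactly under the stated inequality. The role of the polygon inequality is therefore to guarantee that the inner hole of $M$ closes up around the origin.

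To carry this out, I would strengthen the inductive hypothesis so that it tracks this hole: as the partial sums $M_j = w_1\bd P \oplus \dots \oplus w_j \bd P$ are built up, maintain that $M_j$ is a connected region separating the origin from infinity whose bounded complementary component (``the hole'') shrinks as each new curve is added, vanishing once the running largest weight no longer exceeds the sum of the rest. Equivalently, I would place the largest weight last: writing $N = -\tfrac{1}{w_1}\bigl(w_2\bd P \oplus \dots \oplus w_k \bd P\bigr)$, the goal becomes $N \cap \bd P \neq \emptyset$, and I would argue as in the proof of Theorem~\ref{theorem: antipodal} that the connected set $N$ contains a point outside $P$ and a point inside $P$, so that it must cross the Jordan curve $\bd P$. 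Producing the exterior point uses the inequality directly---because $w_2 + \dots + w_k \ge w_1$, the set $N$ reaches at least as far from the origin as $\bd P$ in a suitable direction---and I expect the \textbf{main obstacle} to be the complementary fact that $N$ also meets the interior of $P$, which is where the hole must be shown to close and which seems to require the full inductive control of $M$ rather than a one-line estimate. The remaining care is bookkeeping: the reflections and direction matching in the support-function estimates, the reduction from a general Jordan region to a star-shaped or polygonal one, and the passage to the origin-on-boundary case by continuity.
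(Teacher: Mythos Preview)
Your reformulation in Minkowski-sum language is correct, and you have correctly isolated the crux: showing that the connected set $N = -\tfrac{1}{w_1}\bigl(w_2\bd P \oplus \dots \oplus w_k \bd P\bigr)$ meets both the exterior and the interior of $P$. The exterior point is indeed easy (send all $p_i$ to a farthest boundary point). But the interior point is a genuine gap, not just bookkeeping. The natural attempt---apply the inductive hypothesis to $w_2,\dots,w_k$ to get $0\in N$---fails precisely when $w_2 > w_3+\dots+w_k$, and your fallback of a ``strengthened inductive hypothesis tracking the hole'' is not made concrete. For a general Jordan region there is no obvious reason the hole of $w_2\bd P\oplus\dots\oplus w_k\bd P$ should fit inside $w_1 P$; the disk intuition (where the hole has radius $\max(0,w_2-w_3-\dots-w_k)$ times the radius) does not transfer without proving something at least as strong as the theorem itself.

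The paper's argument is structurally different and sidesteps this obstacle completely. Rather than placing the largest weight \emph{last} and analyzing the global Minkowski sum, it places the largest weight \emph{first}, at a nearest boundary point $p$, and bundles all remaining weights at the single interior point $-\tfrac{w_1}{w_2+\dots+w_k}\,p$ (interior because the ratio is $\le 1$). From this configuration it pushes the weights onto $\bd P$ one at a time: slide $w_j$ along $\bd P$ while letting $w_{j+1}$ move to keep the barycenter fixed; then $w_{j+1}$ traces a translate of $-\tfrac{w_j}{w_{j+1}}\bd P$, a curve at least as large as $\bd P$ that starts inside $P$ and therefore must cross $\bd P$. The ordering $w_1\ge w_2\ge\dots\ge w_k$ is used at every step to guarantee the scale factor is $\ge 1$. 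This is a local, constructive argument (and in fact yields an algorithm); it never needs to understand the full Minkowski sum or its hole. Your approach, if the strengthened induction could be made precise, would give more global information about $M$, but as it stands the key step is missing.
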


In terms of the Minkowski sum, the theorem says that 
\begin{equation*}
 (w _1 + \dots + w _k) P 
\subseteq 
 w_1 \bd P \oplus w_2 \bd P \oplus \dots \oplus w_k \bd P
\end{equation*}
if none of the weights is bigger than the sum of the rest. 

If $P$ is the unit disk, 
Theorem~\ref{theorem: different weights} is related to 
a reachability problem of a chain of links (or a robot arm)
of lengths $w_1$, $w_2$, \dots, $w_k$ 
where one end is placed at the origin, 
each link can be rotated around the joints, 
and the links are allowed to cross each other. 
In order to reach every 
point of the disk of radius $w _1 + \dots + w _k$ centered at the origin, 
it is known that the 
condition 
$w_1 \le w_2 + \dots + w_k$
is sufficient (and necessary)~\cite{hopcroft1985movement}. 
Theorem~\ref{theorem: different weights} generalizes this to 
arbitrary $P$.

Our proof is constructive and leads to an efficient algorithm to find 
such a location of points for a given polygon~$P$.
On the other hand, if we drop 
the condition $w_1 \le w_2 + \dots + w_k$, 
then the conclusion does not hold in general
(just let $P$ be a disk centered at the origin), 
and we show that 
it is NP-hard to decide 
whether it holds for a given polygon~$P$.

\subsection*{Tripodal points}

Our next result
concerns the $3$-dimensional setting.
Generalizing the notion of antipodal points
in Theorem~\ref{theorem: antipodal}, 
we prove the following in Section~\ref{sec_3d}. 

\begin{theorem}
\label{theorem: tripodal}
On the boundary of any $3$-dimensional compact set
containing the origin, 
there are \emph{tripodal} points, 
that is, three points forming an \emph{equilateral} triangle 
centered at the origin. 
\end{theorem}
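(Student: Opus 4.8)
The plan is to reduce the statement to a two-dimensional ``equal radii'' condition and then resolve that by an equivariant topological argument on the rotation group. Observe first that three points $p_1,p_2,p_3$ form an equilateral triangle centered at the origin exactly when they lie at mutual angle $120^\circ$ in a common plane through the origin and are equidistant from it; equivalently, $p_i = R\,g v_i$ for some rotation $g \in SO(3)$ and radius $R>0$, where $v_1,v_2,v_3$ are fixed unit vectors at mutual angle $120^\circ$ in the $xy$-plane (so $v_1+v_2+v_3=0$). To detect when such a triangle is inscribed in $\bd P$, introduce the outer radial function $\rho(w) = \max\{\,s\ge 0 : s w \in P\,\}$ for $w$ in the unit sphere $S^2$; since $P$ is closed, bounded, and contains the origin, the point $\rho(w)\,w$ always lies on $\bd P$. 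A centered equilateral triangle inscribed in $\bd P$ then corresponds precisely to a rotation $g$ with $\rho(g v_1)=\rho(g v_2)=\rho(g v_3)$: the three points $\rho(g v_i)\, g v_i$ are equidistant, coplanar, at $120^\circ$, and on $\bd P$. (For star-shaped $P$ this $\rho$ is the usual radial function and is continuous; the general case is discussed below.)

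The core is to show that every continuous $\rho$ is constant on some such triple, a statement of Yamabe--Yujob\^o type. Define $\Phi : SO(3) \to W$, where $W = \{(a,b,c) : a+b+c=0\} \cong \Rset^2$, by letting $\Phi(g)$ be the projection of $(\rho(gv_1),\rho(gv_2),\rho(gv_3))$ onto $W$. A zero of $\Phi$ is exactly the triple we seek. The cyclic group $\Zset_3 = \langle c\rangle \le SO(3)$, with $c$ the rotation by $120^\circ$ about the $z$-axis cyclically permuting $v_1,v_2,v_3$, acts freely on $SO(3)$ by right translation, and $\Phi(gc)$ is the cyclic shift of $\Phi(g)$; thus $\Phi$ is $\Zset_3$-equivariant with respect to the permutation action on $W$. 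If $\Phi$ had no zero, then $\Phi/|\Phi|$ would be a $\Zset_3$-equivariant map $SO(3) \to S(W) = S^1$, where $S^1$ carries the free rotation by $120^\circ$.

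The crux, and the main obstacle, is to rule out such an equivariant map, a Borsuk--Ulam-type fact for the free $\Zset_3$-action on $SO(3) \cong \mathbb{RP}^3$. I would pass to the quotient $3$-manifold $Q = SO(3)/\Zset_3$: an equivariant map $SO(3)\to S^1$ descends to $\bar f : Q \to S^1/\Zset_3 = S^1$ pulling the defining class $\alpha \in H^1(Q;\Zset_3)$ of the cover $SO(3)\to Q$ back from the generator, so such a map exists if and only if $\alpha$ lifts to an integral class, i.e. if and only if its Bockstein in $H^2(Q;\Zset)$ vanishes. Since $\pi_1(Q)$ has order six, a direct computation shows this Bockstein is non-zero, giving the contradiction and hence a zero of $\Phi$. (For a self-contained polyhedral treatment one may instead put $\rho$ in general position and count inscribed triangles with signs, the same invariant reappearing as a non-zero mod-$3$ degree.)

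It remains to address general, non-star-shaped polyhedra, where $\rho$ is only upper semicontinuous and may jump across directions grazing a reflex feature; this is the ``tedious'' adaptation flagged in the Foreword. I would handle it by a limiting argument: approximate $P$ by polyhedra for which $\rho$ is continuous (or regularize $\rho$ directly), apply the argument above to obtain inscribed triangles, and extract a convergent subsequence of vertices; because $\bd P$ is compact, the limiting three points still lie on $\bd P$ and still form a centered equilateral triangle. Checking that the limit does not degenerate, which holds since the origin lies in the interior and keeps the radius $R$ bounded away from $0$, completes the proof. I expect the equivariant non-existence above to be the genuine mathematical obstacle, with the semicontinuity issue being technical but routine.
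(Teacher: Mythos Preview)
Your route via the outer radial function and a $\Zset_3$-equivariant obstruction on $SO(3)$ is essentially the Yamabe--Yujob\^o argument alluded to in the Foreword, and it is genuinely different from what the paper does. The paper never introduces a radial function or any cohomology. Instead it fixes a piecewise-linear path $L\subset\bd P$ from a nearest point $p_0$ to a farthest point $p_1$ and, for each $\gamma(t)\in L$ and each angle $\theta$, records whether the other two vertices $b(t,\theta)$, $c(t,\theta)$ of the centered equilateral triangle with apex $\gamma(t)$ lie inside or outside $P$. This signature lives in the $4$-cycle on $\{++,+-,-+,--\}$; as $t$ runs from $0$ to $1$ it traces a walk from $++$ to $--$. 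A parity count on crossings of the edge $(++,+-)$ then gives a contradiction (the walks at $\theta=0$ and $\theta=\pi$ must agree in parity by continuity, yet one is obtained from the other by swapping $+-$ with $-+$), forcing some $(t,\theta)$ with both $b$ and $c$ on $\bd P$. Your approach is cleaner conceptually and points toward higher dimensions; the paper's is elementary and, importantly, applies directly to non-star-shaped polyhedra because it never needs a continuous radial parametrization of $\bd P$.

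That last point is where your proposal has a genuine gap. You suggest handling non-star-shaped $P$ either by approximating it with polyhedra whose radial function is continuous, or by regularizing $\rho$. Neither works as stated. A polyhedron with continuous outer radial function about the origin is star-shaped about the origin, and an arbitrary polyhedron containing the origin need not be Hausdorff-approximable by star-shaped ones (picture a solid thickened letter~C with the origin deep in one arm). If instead you smooth $\rho$ to continuous $\rho_n$, the equilateral triangles you obtain have vertices $\rho_n(g_n v_i)\,g_n v_i$, which lie on the graph of $\rho_n$, not on $\bd P$; and even if upper semicontinuity yields $R\le\rho(g v_i)$ in the limit, without star-shapedness this does not place $R\,g v_i$ in $P$, let alone on $\bd P$. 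So the step you label ``technical but routine'' is precisely where the argument breaks for general polyhedra, and the paper's path-and-parity proof is designed to sidestep exactly this difficulty.
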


After publication of the conference version of the present
paper~\cite{barba2014weight}, we discovered that Theorem~\ref{theorem:
  tripodal} follows from a result by Yamabe and
Yujob\^o~\cite{yamabe1950continuous}, with an extension to star-shaped polyhedra
in dimensions higher than three studied by Gordon~\cite{gordon1955generalization}.  Since
the first of these papers is hard to find and the second is not
available in English, we present our alternative proof.

A classical problem reminiscent of Theorem~\ref{theorem: tripodal} 
is the square peg problem of Toeplitz.
Given a closed curve in a plane, the problem asks for four points on
the curve forming the vertices of a square.
It was conjectured by Otto Toeplitz in 1911 
that every Jordan curve contains four such points.
Although the problem is 
still open for general Jordan curves, 
it has been affirmatively solved for curves with some smoothness conditions.
As a variant of this problem, 
Meyerson~\cite{meyerson1980equilateral} and
Kronheimer and Kronheimer~\cite{kronheimer1981tripos} 
proved that
for any triangle $T$ and any Jordan curve $C$, 
we can find three points on $C$ forming the vertices of a 
triangle similar to $T$
(note the contrast to our Theorem~\ref{theorem: tripodal} where 
we need the triangle to be equilateral). 
See a recent survey of Matschke~\cite{matschke2014survey} on these problems.

\subsection*{Antipodal points on convex polyhedra}

By viewing Theorem~\ref{theorem: antipodal} again as the
balancing of two equal weights, 
we can consider another generalization to convex polyhedra in dimension~$d$,
asking whether there are two antipodal points on the surface of the polyhedron.
This is not very interesting 
if we are allowed to put them anywhere on the surface of the polyhedron: 
we can then cut the polyhedron by any plane through the origin
and apply Theorem~\ref{theorem: antipodal}.

The question becomes interesting 
if we restrict the points to lie on \emph{lower-dimensional faces} of the polyhedron. 
In Section~\ref{section: convex} we prove the following.
\begin{theorem}
\label{theorem: halving}
For any convex polyhedron $P \subset \Rset^d$ containing the origin,
there is an antipodal pair consisting of a point on a $\lfloor d/2
\rfloor$-face and a point on a~$\lceil d/2\rceil$-face.
\end{theorem}

In other words, 
\begin{equation*}
2P \subseteq S_{\lfloor d/2 \rfloor}(P) \oplus S_{\lceil d/2 \rceil}(P). 
\end{equation*}
where $S _k (P)$ denotes the $k$-skeleton of a convex polyhedron~$P$.

We also show that it is not possible to replace the pair of dimensions
by~$(k, d-k)$ for any~$k < \lfloor d/2\rfloor$.

By repeated application of Theorem~\ref{theorem: halving}, it follows
that when the dimension~$d$ is a power of two, then there are~$d$
points on the \emph{edges} (the one-skeleton) of~$P$ whose barycenter
is the origin.  Dobbins has shown using equivariant topology that this
statement does indeed hold in any dimension, and in a more general
form: When $d = nk$, then there are $n$ points on the~$k$-skeleton
whose barycenter is the origin~\cite{dobbins2015point}.  Blagojevi\'c et
al.~\cite{blagojevic2019barycenters} gave an alternative proof of the same
result.  Finally, Dobbins and Frick~\cite{dobbins2021barycenters}
generalize Theorem~\ref{theorem: halving} by setting~$d = nk + r$, for
$0 \leq r < n$, and prove the existence of~$n$ points in the~$k$-faces
and~$(k+1)$-faces whose barycenter is the origin.  They also consider
non-equal weights.

\subsection*{Related work}

Bringing the center of mass to a desired point 
by putting counterweights is a common technique 
for reduction of vibrations in mechanical engineering~\cite{arakelian2005shaking}. 
There have been studies on Minkowski operations 
considering the boundary of objects 
(see, for instance, Ghosh and Haralick~\cite{ghosh1996mathematical}), 
but our paper seems to be the first 
to deal with the general question of covering the body 
with convex linear combinations of the boundary.

\section{Distinct weights}
\label{sec_2D} 

We prove Theorem~\ref{theorem: different weights}.  Let us first
assume that $P$ is a simple polygon containing the origin in its
interior, and let $p$ be a point on~$\bd P$ closest to the origin. We
first put the biggest weight $w _1$ at~$p$, and the remaining~$k-1$
weights at $p' = -p \cdot w _1 / (w _2 + \ldots + w _k)$.  By the
assumption $w _1 \leq w _2 + \dots + w _k$ we have $p'\in P$.  The
barycenter of the chosen~$k$ weighted points is the origin.  One of
the weights, namely~$w_1$, lies on~$\bd P$, while the remaining
weights lie in~$P$.

We will now repeatedly move two of the weights while maintaining the
barycenter at the origin, in each step moving one more weight to~$\bd
P$.  Let $q_1, q_2, \dots, q_k$ be the current position of the~$k$
weights, with $q_1, q_2, \dots, q_i \in \bd P$, while $q_{i+1}= \dots =
q_{k} = p'$.

We will now move $q_i$ and~$q_{i+1}$ such that both lie on~$\bd P$.
We set $r = \sum_{j=1}^{i-1}w_{j}q_{j} + \sum_{j=i+2}^{k}
w_{j}q_{j}$. By assumption, we have $w_{i}q_{i} + w_{i+1}q_{i+1} + r =
0$, and thus $q_{i+1} = -(r + w_{i}q_{i})/w_{i+1}$.  If we move the
weight~$w_{i}$ to a new position~$q'_{i}\in \bd P$, we can move
$w_{i+1}$ to $q'_{i+1} = -(r + w_{i}q'_{i})/w_{i+1}$ to maintain the
barycenter at the origin.  As $q'_{i}$ moves along~$\bd P$, the
point~$q'_{i+1}$ moves along~$\bd P'$, where $P' = -(w_{i}/w_{i+1}) P
- r/w_{i+1}$, that is, a translated, reflected, and scaled copy
of~$P$.  Since~$q_{i+1} = p' \in \bd P' \cap P$, the boundary~$\bd P'$
contains a point in~$P$. Since $w_{i}/w_{i+1} \geq 1$, $P'$ cannot lie
entirely inside~$P$, and so the boundaries~$\bd P$ and~$\bd P'$ must
intersect in a point~$q'_{i+1}$.  We move $w_{i+1}$ to~$q'_{i+1} \in
\bd P$, and move~$w_{i}$ to the corresponding point~$q'_{i}\in\bd P$,
see \figurename~\ref{fig:twoweights}.
\begin{figure}[t]
  \centerline{\includegraphics{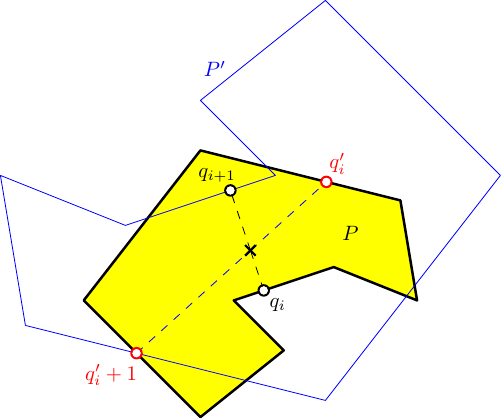}}
  \caption{Proof of Theorem~\ref{theorem: different weights}.  The
    weights $w_i$ and $w_{i+1}$ are initially at $q_i \in \bd P$ and
    $q_{i+1} = p' \in P$. As $w_i$ moves along $\bd P$, $w_{i+1}$
    moves along a magnified (and reflected) copy~$\bd P'$ of $\bd P$,
    which intersects $\bd P$ at some point~$q'_{i+1}$.}
\label{fig:twoweights}
\end{figure}

Repeating this step $k-1$~times, we bring all weights to $\bd P$,
proving Theorem~\ref{theorem: different weights} for the case
where~$P$ is a simple polygon.

We now consider the case that~$P$ is an arbitrary compact set
containing the origin in~$\Rset^{2}$.  For $m = 1, 2, \dots$ we
partition the plane with an axis-aligned grid whose cells have side
length~$1/m$, and let $A _m \supseteq P$ be the union of all grid
cells intersecting~$P$ (where a ``grid cell'' is to be understood as
including its boundary). Note that each point in $\bd A _m$ is within
distance $\sqrt 2 / m$ from $\bd P$.  Let~$B _m \subseteq A _m$ be the
union of all grid cells reachable from the origin by a path in the
interior of $A _m$ (this step is necessary because we do not
require~$P$ to be connected). Let $X_{m}$ be the unique unbounded
connected component of~$\Rset^2 \setminus B_{m}$, and set~$C_{m} =
\Rset^2 \setminus X_{m}$ (in other words, $C_{m}$ is~$B_{m}$ with all
``holes filled in'').

Since $\bd C_{m} \subseteq \bd B _m \subseteq \bd A_{m}$, each point
in~$\bd C_{m}$ lies at distance at most~$\sqrt{2}/m$ from~$\bd P$. We
observe that~$C_{m}$ is a simple polygon containing the origin, and so
we can apply the above special case and obtain a $k$-tuple of
locations $\mathbf q ^{(m)} = \bigl( q ^{(m)} _1, \ldots, q ^{(m)} _k
\bigr) \in (\bd C_m)^k$ such that putting the weight $w _i$ at 
$q^{(m)}_i$ (for $i = 1$,\ldots, $k$) brings the barycenter to the origin.

Since the sequence $\mathbf q ^{(1)}$, $\mathbf q ^{(2)}$, \ldots\ is
in the compact space $U ^k$, where $U$ is a sufficiently large compact
set containing $P$, it has a subsequence that converges to some
$\mathbf q = (q _1, \ldots, q _k)$. Since each $q ^{(m)} _i$ is within
distance $\sqrt 2 / m$ from $\bd P$, each $q _i$ is in $\bd
P$. Furthermore, since the barycenter is a continuous function of the
location of the weights, we conclude that putting the weight $w _i$ at
$q _i$ for each $i$ brings the barycenter to the origin, proving the
general form of Theorem~\ref{theorem: different weights}.

\subsection*{Algorithmic aspects}

We consider the computational problem
that corresponds to Theorem~\ref{theorem: different weights}: 
Given a region $P$ and a set of $k$ weights, 
we want to determine whether we can 
balance the weights by putting them on $\bd P$, 
and if so, to find such a location. 
We restrict ourselves to the case where $P$ is a simple polygon with $n$
vertices, and design algorithms in terms of $n$ and $k$.

If none of the weights exceeds the sum of the others, 
the proof of Theorem~\ref{theorem: different weights} 
implies a polynomial-time algorithm.
In order to replace $q_i$ and $q_{i+1}$ 
with a pair of boundary points $q'_i$ and $q'_{i+1}$
(Figure~\ref{fig:twoweights}), 
we need to find an intersection point of 
$\bd P$ and~$\bd P'$.
This can be done in $O(n \log n)$ time.
The initial location of the largest weight 
can be found in $O(n)$ time. Thus, we have an $O(k n \log n)$ time algorithm.

We can design a faster algorithm as follows. 
We greedily divide the weights into three groups so that 
no group weighs more than the sum of the rest.
This is always possible in $O(k)$ time
(as long as no single weight exceeds the sum of the others). 
Thus, we have an instance for $k = 3$, 
which we solve in $O (n \log n)$ time. 
This gives an $O(k + n \log n)$-time algorithm,
although the output may look a little artificial 
since all weights will be located at (at most) three points. 

On the other hand, if we are given a set of an unknown number of weights 
that may contain a weight exceeding the sum of the rest, 
then the problem is NP-hard.
\begin{proposition}
\label{prop:npcomplete}
There exists a polygon $P \subseteq \Rset^2$ containing the origin 
such that it is NP-hard to determine if 
a given set of weights can be placed on the boundary $\bd P$
so that their barycenter is at the origin. 
\end{proposition}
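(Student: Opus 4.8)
The plan is to prove both halves of the claim: that the decision problem lies in NP, and that it is NP-hard, the latter by a reduction from \textsc{Partition}. A natural certificate is the list of placement points $p_1,\dots,p_k\in\bd P$ realizing $\sum_i w_i p_i = 0$. To see that this witness can be taken of polynomial size, fix the (constant-size) edge structure of $P$ and, for a genuine solution, fix which edge each weight sits on; parameterizing each $p_i$ by its position $t_i\in[0,1]$ along its edge turns feasibility into a linear program in the $t_i$ with exactly two linear equality constraints (the $x$- and $y$-components of the balance equation). If this polytope is nonempty it has a vertex at which all but at most two of the $t_i$ lie in $\{0,1\}$ and the remaining two are determined by the balance equations; such a vertex has rational coordinates of polynomial bit-length in the weights and the fixed edge data. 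Hence a yes-instance admits a polynomial-size placement, and a verifier can check $p_i\in\bd P$ and $\sum_i w_i p_i=0$ in polynomial time.

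For hardness I would start from the observation that, by Theorem~\ref{theorem: different weights}, every instance in which no single weight exceeds the sum of the others is automatically a yes-instance; therefore any hard family must live in the \emph{dominant-weight} regime, where one weight $w_1 > w_2+\dots+w_k$ spoils the hypothesis of Theorem~\ref{theorem: different weights}. I would exploit exactly this regime. Given a \textsc{Partition} instance $a_1,\dots,a_n$ with $\sum_i a_i = 2S$, I produce the weight multiset $\{\,w_1, a_1,\dots,a_n\,\}$ with $w_1$ a suitable integer multiple of $S$ (so $w_1>2S$, forcing the dominant regime), and I keep the polygon $P$ \emph{fixed}, independent of the instance.

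The gadget I have in mind makes the horizontal balance equation \emph{exactly} the partition condition by quantizing $x$-coordinates. I would let $\bd P$ consist of two tall vertical ``spikes'' on the lines $x=+1$ and $x=-1$, a central vertical segment on the line $x=0$ running below the origin, and retracted (near-origin) connecting boundary; the origin sits inside, just above the central bottom segment. Any weight placed on the right spike contributes exactly $+a_i$ to the $x$-moment and any on the left spike exactly $-a_i$, \emph{regardless of height}, since the $x$-coordinate there is identically $\pm1$; the dominant weight placed on the central segment contributes $0$. Thus, once all weights are confined to these vertical pieces, $\sum_i w_i x_i = \sum_{i\in R}a_i - \sum_{i\in L}a_i$ is an integer that vanishes iff the $a_i$ split into two equal halves. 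The vertical moment then fixes the heights/weight consistently (the dominant weight, low and central, cancels the lifted light weights), and the forward direction is immediate: a balanced subset yields a balanced placement.

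The crux — and the step I expect to be the main obstacle — is the \emph{rigidity} of the reverse direction: ruling out exotic feasible placements that use the retracted connecting boundary rather than the vertical spikes, and doing so \emph{exactly} rather than approximately. This is delicate precisely because $w_1$ scales with $S$, so any positional slack is amplified by $w_1$ and a naive ``close enough'' argument does not close. I would therefore design the connecting boundary so that placing any light weight off the vertical spikes, or the dominant weight off the central segment, makes the two balance equations jointly infeasible (using the dominant weight's landing point $-\tfrac{1}{w_1}\sum_{i\ge2}w_i p_i$, which must hit $\bd P$ near the unique nearest boundary point, to pin its $x$-coordinate to $0$), so that every feasible placement is forced onto the quantized positions. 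The vertical-segment design is what lets this forcing yield an \emph{exact} integer equation, avoiding the amplified-error problem. Granting this rigidity, a placement exists iff a balanced partition does; since the reduction is plainly polynomial and $P$ is a single fixed polygon, combining this with NP membership proves the proposition.
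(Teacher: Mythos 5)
Your NP-membership argument is fine (indeed more careful than the paper, which simply asserts membership), and your overall plan---reduction from \textsc{Partition}, a fixed polygon, a dominant weight $w_1$ that breaks the hypothesis of Theorem~\ref{theorem: different weights}---matches the paper's strategy. But the proof has a genuine gap, and you name it yourself: the entire difficulty of the hardness direction is the rigidity claim (``every feasible placement is forced onto the quantized positions''), and you only state the intention to design a boundary with this property, ending with ``granting this rigidity.'' That is precisely the step that needs a construction and a proof, so as written the reduction is not established.

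Worse, the specific gadget you sketch cannot be repaired in the form you describe. You want the light weights to sit anywhere on the vertical spikes (``regardless of height''), with the dominant weight on a central segment below the origin. Then the balance condition says the barycenter of the light weights must equal the point $t=-\bigl(w_1/\sum_{i\ge 2}w_i\bigr)q_1$, and with your height freedom this target $t$ varies over a vertical segment that, in your layout, lies in the \emph{interior} of $P$ (above the origin, between the spikes). But whenever $t$ is interior to $P$ and no light weight exceeds the sum of the other light weights (which holds for typical \textsc{Partition} instances), Theorem~\ref{theorem: different weights} itself---applied to $P$ with $t$ as the new origin---produces a placement of the light weights on $\bd P$ with barycenter $t$, \emph{regardless} of whether the instance is a yes-instance. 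So exotic placements are not an edge case to be excluded by careful boundary design; they provably exist unless the target $t$ is forced onto a proper face of $\conv(P)$, where convexity pins every contributing point to that face's supporting line. This is exactly the paper's trick: its polygon has two reflex vertices, $w_1$ is chosen so that $-2q_1$ must lie in $\conv(P)$, which forces $q_1$ to be a reflex vertex and makes the target $-2q_1$ the \emph{midpoint of an edge} of $\conv(P)$; since that edge's supporting line meets $\bd P$ in exactly two points, the light weights must all sit at those two vertices, and the barycenter is the midpoint iff the weights split evenly. Your construction would need to be reworked along these lines (target on a hull edge, not in the interior), which in particular destroys the ``free height'' feature that your vertical-moment bookkeeping relies on.
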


\begin{proof}
We prove NP-hardness by reducing the \PARTITION problem to this problem.
The input to \PARTITION is a set of $N$ nonnegative integers $a_1, a_2, \ldots a_N$,
and the problem asks whether there is a subset $X \subset \{1,\dots,N\}$ such that
\[\sum_{i \in X} a_i = \sum_{i \in \{1,\dots,N\} \setminus X} a_i.\]

We transform the problem into a weight balancing problem as follows:
we set $k=N+1$, $w_i = a_{i-1}$ for $i=2,3,\dots,k$, and $w_1 = 2
\sum_{i=1}^{N} a_i$.

Let $P$ be the non-convex polygon with vertices 
$(0,1)$, $(2,2)$, $(2,-2)$, $(0,-1)$, $(-2,-2)$, and $(-2,2)$,
see \figurename~\ref{fig:npcomplete}(left). 
Note that $P = -P$ and $-2P$ contains the convex hull $\conv(P)$ of $P$. 
Moreover, the two reflex vertices of $-2P$ are the only points of $\bd
2P\cap \conv(P)$, and 
each of the points is the midpoint of an edge of $\conv(P)$ as shown
in
\figurename~\ref{fig:npcomplete}(right). 
\begin{figure}[t]
  \centerline{\includegraphics{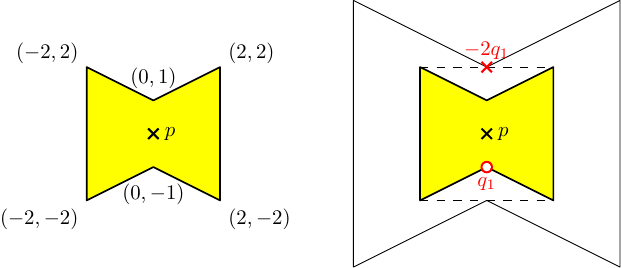}}
  \caption{A hard instance of the balancing location problem.}
  \label{fig:npcomplete}
\end{figure}

Observe that the only possible location $q_1$ of weight $w_1$ is one
of the two reflex vertices since its reflection $-2q_1$ can be written
as a convex combination of other points on $\bd P$, and is hence
contained in $\conv(P)$.  That is, $-2q_1$ lies on the midpoint of an
edge of $\conv(P)$ (without loss of generality, we may assume it is
the edge $e$ from $(-2,2)$ to $(2,2)$). In particular, there is a
solution if and only if we can place the remaining points in a way
that their barycenter lies on the midpoint of~$e$.

Since the new target point lies on the edge $e$ of $\conv(P)$, the
only possible location for the remaining points is $(-2,2)$ or
$(2,2)$. Moreover, the barycenter becomes the midpoint if and only if
the weights are equally divided. Thus, the \PARTITION problem is
reduced to the balancing location problem. Since \PARTITION is
NP-complete, detecting the existence of a balancing location is
NP-hard.
\end{proof}

\section{Tripodal points}
\label{sec_3d}

In this section, we 
consider a $3$-dimensional compact set~$P$
and prove Theorem~\ref{theorem: tripodal}, 
which states that there are tripodal points on the boundary $\bd P$.
Note that 
tripodal points are a natural analogue of antipodal points: 
saying that three points are tripodal is equivalent to 
requiring that 
they are at the same distance from the origin and 
their barycenter is the origin. 

We will first assume that~$P$ is the union of a finite number of
convex polyhedra such that the boundary~$\bd P$ is connected.
Let $p_0$ and $p_1$ be a nearest and a farthest point on $\bd P$, respectively, from the origin~$o$.
They exist because $\bd P$ is compact.
By our assumption on~$P$, there is a simple piecewise-linear path $L$ from $p_0$ to $p_1$ on $\bd P$, 
parametrized by a one-to-one continuous function 
$\gamma \colon [0,1] \to L$ 
such that $\gamma(0) = p_0$ and $\gamma(1) = p_1$.

We claim that 
there exist three points 
$a \in L$, $b \in \bd P$, and $c \in \bd P$
that are tripodal.
For each $q \in L$, 
let $H (q)$ be the set of vectors perpendicular to 
the line through $o$ to $q$. 
We use the following fact.

\begin{lemma}
\label{claim:pach}
There exists a continuous piecewise algebraic 
function $\mathbf v\colon L \to \Sset ^2$ such that
$\mathbf v (q) \in H (q)$ for all $q \in L$.
\end{lemma}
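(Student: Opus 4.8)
The plan is to regard $\mathbf v$ as a continuous section of the circle bundle over $L$ whose fibre over $q$ is the great circle $H(q)\cap\Sset^2$. Since $L$ is homeomorphic to $[0,1]$ and hence contractible, this bundle is trivial, so a continuous section must exist; the real content of the lemma is to exhibit one that is also piecewise algebraic. I will construct it directly along the parametrization $\gamma$, writing $w(t)=\gamma(t)$ and $u(\gamma(t))=w(t)/\lVert w(t)\rVert$ for the unit radial direction. This direction is well defined and continuous because $o$ lies in the interior of $P$, so every point of $\bd P$ is at positive distance from $o$; being the composition of the piecewise-linear $\gamma$ with normalization, $t\mapsto u(\gamma(t))$ is itself piecewise algebraic.

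The basic building block is the ``project-and-normalize'' operator associated with a fixed unit vector $g$,
\[
N(g,t)\;=\;\frac{g-\bigl(g\cdot u(\gamma(t))\bigr)\,u(\gamma(t))}{\bigl\lVert g-\bigl(g\cdot u(\gamma(t))\bigr)\,u(\gamma(t))\bigr\rVert},
\]
which lies in $H(\gamma(t))\cap\Sset^2$ whenever the denominator is nonzero, i.e.\ whenever $u(\gamma(t))\neq\pm g$. Multiplying numerator and denominator by $\lVert w(t)\rVert^2$ turns the (scale-invariant) expression into $\lVert w(t)\rVert^2 g-(g\cdot w(t))\,w(t)$ over its norm; on each linear piece of $\gamma$ the numerator is a polynomial vector in $t$, and the denominator is the square root of a polynomial, so $N(g,\cdot)$ is an algebraic function there.

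Next, using uniform continuity of $u\circ\gamma$ on the compact interval $[0,1]$, I will choose a partition $0=t_0<t_1<\dots<t_m=1$ fine enough that $u(\gamma(s))\cdot u(\gamma(t_j))>0$ for all $s\in[t_j,t_{j+1}]$ and all $j$; that is, on each subinterval the radial direction stays in the open hemisphere centred at its value at the left endpoint. I then define $\mathbf v$ inductively: pick any unit $g_0\perp u(\gamma(t_0))$ and set $\mathbf v(\gamma(s))=N(g_0,s)$ on $[t_0,t_1]$; having defined $\mathbf v$ on $[t_0,t_j]$, put $g_j=\mathbf v(\gamma(t_j))$ and set $\mathbf v(\gamma(s))=N(g_j,s)$ on $[t_j,t_{j+1}]$. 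Since $g_j\perp u(\gamma(t_j))$, the hemisphere condition forces $u(\gamma(s))\cdot u(\gamma(t_j))>0$, which rules out $u(\gamma(s))=\pm g_j$ and hence keeps the denominator in $N$ bounded away from $0$ on the whole subinterval.

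The key point that makes this patching succeed is continuity at the breakpoints, and it comes for free from the construction: because $g_j=\mathbf v(\gamma(t_j))$ is perpendicular to $u(\gamma(t_j))$, the projection in $N(g_j,t_j)$ is the identity and returns $g_j$ itself, so the new piece agrees with the previously defined value at $t_j$; within each subinterval $N$ is continuous since the denominator never vanishes. Thus $\mathbf v$ is continuous on all of $L$, satisfies $\mathbf v(q)\in H(q)\cap\Sset^2$ everywhere, and is algebraic on each piece (a polynomial vector over the square root of a polynomial), with all constants algebraic once $g_0$ and the $t_j$ are chosen with algebraic coordinates. I expect the main obstacle to be exactly this simultaneous reconciliation of the three demands—staying perpendicular to the radial direction, remaining piecewise algebraic, and being globally continuous—which is resolved by the device of carrying each endpoint value forward as the guide vector of the next piece, at the cost of invoking compactness to secure a finite admissible partition.
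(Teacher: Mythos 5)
Your proof is correct, and it follows the same overall template as the paper's --- cut $L$ into finitely many short pieces, define $\mathbf v$ on each piece by projecting a reference vector onto $H(\gamma(t))$ and normalizing, and use the shortness of the pieces to keep the projection from vanishing --- but your gluing mechanism is genuinely different, and in fact more robust. The paper first assigns values $\mathbf v(u)\in H(u)$ \emph{arbitrarily} at all points of a subdivision $S$, then on each segment $uv$ projects the linear interpolation of $\mathbf v(u)$ and $\mathbf v(v)$ onto $H(x(t))$; continuity at the endpoints holds because the projection fixes $\mathbf v(u)$ and $\mathbf v(v)$, and non-vanishing is asserted to follow from ``a suitably large choice of $S$.'' Taken literally, that assertion has an edge case: if one happens to choose $\mathbf v(v)=-\mathbf v(u)$, the interpolant vanishes at the midpoint parameter no matter how short the segment is, so the endpoint values cannot truly be arbitrary --- they must be chosen compatibly as $S$ is refined, a point the paper glosses over. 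Your forward-propagation scheme supplies exactly this compatibility for free: each interval $[t_j,t_{j+1}]$ uses a single guide vector $g_j=\mathbf v(\gamma(t_j))$ inherited from the previous interval; since $g_j\perp u(\gamma(t_j))$, the hemisphere condition $u(\gamma(s))\cdot u(\gamma(t_j))>0$ rules out $u(\gamma(s))=\pm g_j$, so the denominator of $N(g_j,\cdot)$ is positive (hence bounded below on the compact subinterval), and continuity at each breakpoint is automatic because the projection acts as the identity on $g_j$ there. Two cosmetic points: your algebraic pieces are the intersections of the intervals $[t_j,t_{j+1}]$ with the linear pieces of $\gamma$, so the joints of $L$ should be added to the partition (the paper includes them in $S$ from the start); and your appeal to the origin lying in the interior of $P$, needed so the radial direction is defined, is the same implicit assumption the paper needs for $H(q)$ to make sense at all. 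Neither affects correctness; in short, same skeleton, but your carrying of each endpoint value into the next piece replaces the paper's ``arbitrary values plus fine subdivision'' step with an argument that actually closes it.
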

\begin{proof}
Consider the points of $L$ as vectors in~$\Rset ^3$, and let $\mathcal
F \subset \Sset^2$ be the set obtained by normalizing all those
vectors.  Since $L$ consists of a finite number of linear segments,
$\mathcal F$ has measure zero.  Thus, there must exist some vector
$\mathbf w _0 \in \Sset ^2 \setminus (\mathcal F \cup -\mathcal F)$.

We define the function $\mathbf v$ as follows: for any point~$q \in
L$, let $\mathbf v (q)$ be the normalized projection of~$\mathbf w _0$
into $H (q)$.  By construction, $\mathbf w _0$ is not parallel to any
vector of~$L$, hence its projection is nonzero and its normalization
is properly defined.  Since~$H(q)$ is a continuous function of~$q$,
the projection of~$\mathbf w _0$ on~$H(q)$ is continuous. Furthermore,
on a single linear segment of~$L$ this projection is an algebraic
function of~$q$.  Since normalization is continuous and algebraic, the
function~$\mathbf v$ is continuous and piecewise algebraic.
\end{proof}

We fix such a function $\mathbf v\colon L\to \Sset ^2$.
For each $t \in [0, 1]$ and each angle $\theta \in [0, \pi]$, 
let $b (t, \theta)$ and $c (t, \theta)$ be the unique pair of points
such that 
$\gamma (t)$, $b (t, \theta)$, $c (t, \theta)$ are tripodal points and
the vector $b (t, \theta) - c (t, \theta) \in H (\gamma (t))$ 
makes an angle of $+\theta$ with $\mathbf v(\gamma (t))$ (using the
vector~$\gamma(t)$ to determine the sign of the rotation).
Define $f _1 (t, \theta) \in \{+, -, 0\}$ by whether the point 
$b (t, \theta)$ lies inside (the interior of) $P$, outside $P$, 
or on $\bd P$.
Define $f _2 (t, \theta)$ analogously using the point $c (t, \theta)$.

If there is $(t, \theta)$ such that 
$f _1 (t, \theta) = f _2 (t, \theta) =0$, 
then $\gamma (t)$, $b (t, \theta)$, $c (t, \theta)$ are tripodal points
and we are done. 
Suppose otherwise. 
Define the \emph{signature} 
of $(t, \theta)$, denoted 
$F (t, \theta)$, as
\begin{equation*}
F (t, \theta) = \begin{cases}
++ & \text{if} \ (f _1 (t, \theta), f _2 (t, \theta)) \in \{(+, +), (+, 0), (0, +)\}, \\
-- & \text{if} \ (f _1 (t, \theta), f _2 (t, \theta)) \in \{(-, -), (-, 0), (0, -)\}, \\
+- & \text{if} \ (f _1 (t, \theta), f _2 (t, \theta)) = (+, -), \\
-+ & \text{if} \ (f _1 (t, \theta), f _2 (t, \theta)) = (-, +)
\end{cases}
\end{equation*}
(\figurename~\ref{fig:signature}). 
\begin{figure}
\centering
\includegraphics{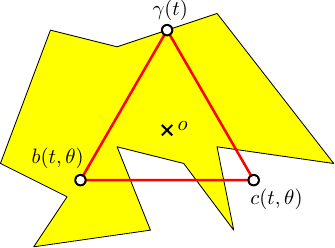}
\caption{The signature is $+-$ for this $(t, \theta)$.}
\label{fig:signature}
\end{figure}
Since $p_0$ and $p_1$ are the nearest and the farthest points, 
it holds that
$F(0, \theta) = ++$ and
$F(1, \theta) = --$.

Consider the domain~$[0,1] \times [0, \pi]$ of~$F(t, \theta)$. It is
partitioned into regions corresponding to the four different values
of~$F$. Boundaries of the regions consist of points where either~$b(t,
\theta)$ or~$c(t, \theta)$ lies on~$\bd P$.  By our assumption,
regions for~$++$ and~$--$ cannot have a common boundary point, and
neither can regions for~$+-$ and~$-+$.  Since $\mathbf v$ is
piecewise algebraic, we can choose a parameterization where the
region boundaries are piecewise algebraic, so any line through the
domain intersects a finite number of points, and regions have a finite
number of points with a tangent parallel to the~$t$-axis.

For each $\theta$, consider the transition of $F (t, \theta)$ as $t$
changes from $0$ to $1$.  This corresponds to the finite sequence of
regions intersected by the segment~$\{ (t, \theta) \mid t \in [0,
  1]\}$, and we obtain a finite walk $\mathcal W (\theta)$ from $++$
to $--$ in the graph~$C$ shown in \figurename\ \ref{fig:graph}.

\begin{figure}
\centering
\includegraphics{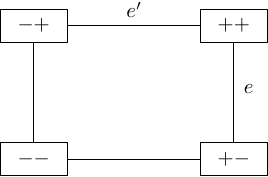}
\caption{The cycle $C$.}
\label{fig:graph}
\end{figure}

Consider the edge $e$ between $++$ and $+-$. 
A walk from $++$ to $--$ is called \emph{even} 
if it uses $e$ an even number (possibly zero) of times, 
and it is called \emph{odd} otherwise. 
For example, the path $++, +-, --$ is odd and 
$++, -+, --$ is even. 

As we increase $\theta$ continuously from~$0$ to~$\pi$, the walk
$\mathcal{W} (\theta)$ changes when the sequence of regions
intersected by the segment~$\{ (t, \theta) \mid t \in [0, 1]\}$
changes, that is, for a finite number of values of~$\theta$ where the
segment is tangent to one of the regions.  The walk can change in two
possible ways:
\begin{itemize}
\item when the segment starts to intersect a new region, an entry
  $a$ in $\mathcal{W}(\theta)$ is replaced by a sequence $a$, $b$,
  $a$, where $b$ is a neighbor of $a$ in $C$;
\item when the segment stops intersecting a region, a sequence
  $a$, $b$, $a$ is replaced by $a$.
\end{itemize}
Neither of these events change the parity of the walk.

On the other hand, 
the walks $\mathcal{W} (0)$ and $\mathcal W (\pi)$ have different parities.
To see this, let $e'$ be the edge between $++$ and $-+$. 
Since $e$ and $e'$ form a cut 
separating $++$ and $--$ in $C$, 
each walk must use them an odd number of times in total. 
Since $b (t, 0) = c (t, \pi)$ and $c (t, 0) = b (t, \pi)$, 
the walk $\mathcal W (\pi)$ is obtained from $\mathcal W (0)$
by exchanging $-+$ and $+-$, 
and hence has opposite parity. 

This is a contradiction, so Theorem~\ref{theorem: tripodal} follows
for the special case where~$P$ is the union of a finite number of
convex polyhedra with connected boundary.

Consider now the general case, where~$P$ is an arbitrary compact set
containing the origin in~$\Rset^{3}$.  We proceed as in the
two-dimensional result: for $m = 1, 2, \dots$ we partition~$\Rset^3$
with an axis-aligned grid of width $1/m$, let $A _m \supseteq P$ be
the union of all grid cells intersecting~$P$, let~$B _m \subseteq A
_m$ be the union of all grid cells reachable from the origin by a path
in the interior of $A _m$, let $X_{m}$ be the unbounded connected
component of~$\Rset^3 \setminus B_{m}$, and set~$C_{m} = \Rset^3
\setminus X_{m}$.

Unlike in the two-dimensional case, the set~$C_{m}$ is not necessarily
a topological ball.  However, we observe that $C_{m}$ is the union of
a finite number of convex polyhedra (the grid cells), and since $\bd
C_{m} = \bd X_{m}$, its boundary~$\bd C_{m}$ is connected.  This
implies that we can apply the above special case and obtain tripodal
points~$(a_{m}, b_{m}, c_{m}) \in (\bd C_{m})^{3}$.  By compactness,
the sequence $(a_{m}, b_{m}, c_{m})$ has a subsequence converging to a
triple of points~$(a, b, c)$.  By continuity, $a, b, c$ are tripodal
points.  Since~$\bd C_{m} \subseteq \bd B _m \subseteq \bd A_{m}$,
each point in~$\bd C_{m}$ lies at distance at most~$\sqrt{3}/m$
from~$\bd P$, so compactness of~$\bd P$ implies that~$a, b, c \in \bd
P$.  This completes the proof of the general form of
Theorem~\ref{theorem: tripodal}.

We may ask similar questions for three points forming other shapes, 
or for higher dimensions. 

\begin{conjecture}\label{conj:simplex}
For a $d$-dimensional polyhedron $P$ containing the origin,
there exist $d$ points on $\bd P$ forming a 
regular $(d-1)$-dimensional simplex centered at the origin. 
\end{conjecture}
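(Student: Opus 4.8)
The plan is to follow the proof of Theorem~\ref{theorem: tripodal} as closely as possible, replacing the equilateral triangle by a regular $(d-1)$-simplex and the single orientation angle $\theta$ by a higher-dimensional orientation space. As before, let $p_0$ and $p_1$ be a nearest and a farthest point of $\bd P$ from the origin~$o$, and fix a simple piecewise-linear path $L\subseteq\bd P$ from $p_0$ to $p_1$, parametrized by $\gamma\colon[0,1]\to L$. For each $t$ I would place one vertex of the simplex at $\gamma(t)$. Writing $r=|\gamma(t)|$, the requirement that the simplex be regular and centered at $o$ forces the remaining $d-1$ vertices to lie in the affine hyperplane perpendicular to $\gamma(t)$ at signed distance $-r/(d-1)$ from $o$, where they form a regular $(d-2)$-simplex of a size determined by $r$. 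The only remaining freedom is the orientation of this $(d-2)$-simplex inside the direction space $H(\gamma(t))\cong\Rset^{d-1}$; once the $d-1$ companion vertices are labeled, this orientation ranges over the rotation group $SO(d-1)$. For $d=3$ this is $SO(2)\cong\Sset^1$, recovering the angle~$\theta$, and the involution exchanging the two companion vertices is exactly $\theta\mapsto\theta+\pi$.

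Next I would set up the signatures exactly as in the tripodal case. For each pair $(t,\rho)$ with $\rho\in SO(d-1)$ the construction yields labeled companion vertices $v_2(t,\rho),\dots,v_d(t,\rho)$, and I would record for each whether it lies inside, outside, or on $\bd P$. The boundary conditions carry over: since $r$ is minimal at $t=0$, all companion vertices lie inside $P$, and since $r$ is maximal at $t=1$, all lie outside $P$. Because $\gamma(t)$ is automatically on $\bd P$, a regular simplex with all $d$ vertices on $\bd P$ exists precisely when all $d-1$ companion vertices lie on $\bd P$ simultaneously. Thus, writing $s(t,\rho)\in\Rset^{d-1}$ for the vector of signed distances of the companion vertices to $\bd P$ (positive inside, negative outside), the conjecture asserts that $s$ vanishes somewhere. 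Assuming otherwise, $s$ defines a map from the total space $E$ --- which fibers over the contractible interval $[0,1]$ with fiber $SO(d-1)$ --- into $\Rset^{d-1}\setminus\{0\}$, and this map is equivariant for the natural action of the symmetric group $S_{d-1}$, which simultaneously permutes the companion vertices and the coordinates of $\Rset^{d-1}$.

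The heart of the argument, and the step I expect to be the main obstacle, is to derive a contradiction from such an equivariant non-vanishing map. For $d=3$ this is precisely the parity computation on the cycle $C$: the single involution $\theta\mapsto\theta+\pi$ interacts with the signature transitions along $t$ so as to force $\mathcal W(0)$ and $\mathcal W(\pi)$ to have opposite parity, contradicting the homotopy invariance of the parity in $\theta$. For general $d$ one would need the analogous invariant of $s$ --- a degree-type or relative obstruction between the two ends $t=0$ and $t=1$, taken $S_{d-1}$-equivariantly --- to be nonzero, which amounts to a Borsuk-Ulam-type statement for the $S_{d-1}$-action on $E$. I expect this to be genuinely delicate, because such equivariant indices are well known to be nonzero only in restricted dimensions; indeed, the hypothesis that $d$ be a product of a power of $2$ and a power of $3$ in Theorem~\ref{theorem: edges in three dimension} strongly suggests that a similar arithmetic restriction on $d$ may be unavoidable with current equivariant methods. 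Identifying the correct invariant and settling the remaining dimensions is exactly why the statement is left as a conjecture.
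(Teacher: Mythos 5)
You have not given a proof, and you are right not to claim one: the statement you were asked about is left as a conjecture in the paper itself (it is Conjecture~\ref{conj:simplex}), and the paper proves only the case $d=3$ (Theorem~\ref{theorem: tripodal}). Your setup is a faithful generalization of that $d=3$ argument: placing one vertex on the path $L$ from a nearest to a farthest boundary point, confining the remaining $d-1$ vertices to a regular $(d-2)$-simplex in the hyperplane perpendicular to $\gamma(t)$ at distance $r/(d-1)$, identifying the labeled orientation space with $SO(d-1)$ (for $d=3$ this is $\Sset^1$, recovering $\theta$), and noting the boundary conditions at $t=0$ and $t=1$ and the $S_{d-1}$-equivariance of the vector of signed distances. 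All of that is sound, modulo the small caveat that at $t=0$ and $t=1$ companion vertices can lie \emph{on} $\bd P$ rather than strictly inside/outside, which the paper's signature convention (lumping $(+,0)$ and $(0,+)$ with $++$, etc.) already handles.

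The genuine gap is exactly where you flag it: the entire content of the theorem would live in the step you do not carry out, namely showing that no $S_{d-1}$-equivariant map from your total space $E$ (an $SO(d-1)$-bundle over $[0,1]$, with the two ends constrained) to $\Rset^{d-1}\setminus\{0\}$ can exist. In the paper's $d=3$ proof this is not a black box but a concrete computation: the orientation space is one-dimensional, so the obstruction reduces to a $\Zset_2$-valued parity of walks in the $4$-cycle $C$, invariant under homotopy in $\theta$ yet flipped by the involution $\theta\mapsto\theta+\pi$. For $d>3$ the analogous obstruction lives in higher equivariant cohomology, and neither you nor the paper exhibits a nonzero invariant; without it there is no contradiction and no theorem. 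One further caution: the arithmetic restriction $d=2^i3^j$ you cite from Theorem~\ref{thm:Michael} concerns a different problem (balancing weights on the $1$-skeleton, via a $\Zset_3$-Euler class), so it is suggestive at best and cannot be imported as evidence about which dimensions your invariant would survive in. In short: your framework is the natural one, your diagnosis of the obstacle is correct, and your conclusion — that the statement remains a conjecture — coincides with the paper's own treatment; but as a proof attempt it is incomplete at its central step.
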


Algorithmic aspects need further investigation.
It is easy to devise an $O (n^3)$-time algorithm to find a tripodal location
guaranteed by Theorem~\ref{theorem: tripodal} 
for a polyhedron with $n$ vertices, 
just by going through all the triples of faces. 
It is not clear if this can be improved.

\section{Antipodal points on convex polyhedra}
\label{section: convex}

A $d$-dimensional (closed bounded) convex polyhedron $P$ decomposes
into faces of dimensions $i = 0,1,\ldots, d$.  Let $F_i$ be the set of
$i$-dimensional faces.  The union $S_k (P) = \bigcup_{i=0}^{k}
\bigcup_{f\in F_i} f$ of faces of at most $k$ dimensions is called the
\emph{$k$-skeleton} of $P$.  In particular, the $1$-skeleton $S_1(P)$
is the union of edges (including vertices), and the $(d-1)$-skeleton
is $\bd P$.

We will now prove Theorem~\ref{theorem: halving} by proving
\begin{equation*}
2P \subseteq S_{\lfloor d/2 \rfloor}(P) \oplus S_{\lceil d/2 \rceil}(P). 
\end{equation*}

Choose any point of the left-hand side, $2 P$. 
We will show that this point is in the right-hand side. 
We may assume that this point is in the interior of $2 P$, 
since the right-hand side is a closed set. 
Also, without loss of generality, 
we may assume that this point is the origin. 
Thus, assuming that $P$ contains the origin in its interior, 
we need to show that the origin belongs to the right-hand side, 
or equivalently, that $
S_{\lfloor d/2 \rfloor}(P) \cap S_{\lceil d/2 \rceil}(-P)
$ is nonempty. 

For simplicity of notation, we assume that $d$ is even. 
The odd case is shown identically 
by replacing $d/2$ by 
$\lfloor d/2 \rfloor$ and $\lceil d/2 \rceil$ accordingly. 

Since $P$ contains the origin in its interior, 
the intersection $P \cap (-P)$ is a $d$-dimensional convex polyhedron. 
Moreover, its boundary $C$ is centrally symmetric (i.e., $C = -C$). 
It suffices to show that $C$ has a vertex in $S_{d/2}(P) \cap S_{d/2}(-P)$. 

A facet ($(d-1)$-dimensional face) of $C$ is a subset of a facet of either $P$ or $-P$.
We start with the special case in which $C$ is simple. That is, every vertex of $C$ is contained in 
exactly $d$ facets of $P$ or $-P$.   
A vertex of $C$ is of \emph{type $(j, d-j)$} if it is contained in $j$ facets of $P$ and $d-j$ facets of $-P$.
Let $v$ be any vertex of $C$, and let $(k, d-k)$ be the type of $v$.  If $k =d/2$, we are done. 
Thus, we assume without 
loss of generality that $k< d/2$.  Since $C$ is centrally symmetric, $-v \in C$, and $-v$ is of type $(d-k, k)$.
Since the $1$-skeleton of $C$ is connected, there exists a path $P$ in the skeleton from $v$ to $-v$.
Let $(x,y)$ be an edge of $P$ with $x$ and $y$ of type $(i, d-i)$ and $(j, d-j)$, respectively.
Then $j \in \{i-1, i, i+1\}$.  Thus, there exists a vertex $w$ on $P$ of type $(d/2, d/2)$.

Now, we consider the general case where $C$ might have a vertex that is an intersection of more than $d$ facets.
We consider an infinitesimal perturbation of hyperplanes defining facets of $P$ to make $C$ simple.
Then, the perturbed version $\tilde{C}$ of $C$ has a vertex $\tilde{v}$ of type $(d/2, d/2)$, which corresponds to a vertex $v$ of $C$.
Thus, $v$ must lie at an intersection of $S_{d/2}(P)$ and
$S_{d/2}(-P)$, completing the proof of Theorem~\ref{theorem: halving}.

Thus we can always find an antipodal pair of points from 
$\lfloor d/2 \rfloor$- and $\lceil d/2 \rceil$-dimensional faces.
However, this does not extend to 
other pairs of dimensions $k$ and $d-k$. 

\begin{proposition}
There exists a convex polyhedron $P \subseteq \Rset^d$ containing the origin
such that for any $k < \lfloor d/2 \rfloor$, 
it holds that $S_{k}(P) \cap S_{d-k}(-P) = \emptyset$.
\end{proposition}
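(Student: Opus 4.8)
The plan is to produce a single explicit example and verify the disjointness by a direct computation in barycentric coordinates. Concretely, I would take $P$ to be any $d$-simplex whose centroid is the origin (a regular simplex, say), with vertices $v_0,\dots,v_d$ scaled so that $v_0+\dots+v_d=0$; the centroid then lies in the interior, as required. Because the $v_i$ are affinely independent, every point $x$ has unique barycentric coordinates $x=\sum_i\lambda_i v_i$ with $\sum_i\lambda_i=1$, and $x\in P$ iff every $\lambda_i\ge0$. The single computation I need at the outset is the analogous description of $-P$: using $\sum_i v_i=0$ to rebase coordinates, one finds $x\in -P$ iff $\lambda_i\le \tfrac2{d+1}$ for all $i$. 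Thus the facet of $P$ opposite $v_i$ is $\{\lambda_i=0\}$, and the matching facet of $-P$ is $\{\lambda_i=\tfrac2{d+1}\}$.

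The next step is to translate skeleton membership into these coordinates. For a simplex a point lies in a face of dimension at most $k$ precisely when at least $d-k$ of its barycentric coordinates vanish, so $x\in S_k(P)$ iff at least $d-k$ of the $\lambda_i$ equal $0$, and dually $x\in S_{d-k}(-P)$ iff at least $k$ of the $\lambda_i$ equal $\tfrac2{d+1}$. (The endpoint $k=0$ is genuinely different, since $S_d(-P)=-P$ is the whole solid rather than a part of $\bd(-P)$; there I would argue separately that no vertex $v_i$ lies in $-P$, as its coordinate $\lambda_i=1$ exceeds $\tfrac2{d+1}$.) Consequently, a point of $S_k(P)\cap S_{d-k}(-P)$ would furnish coordinates $\lambda_0,\dots,\lambda_d\in[0,\tfrac2{d+1}]$ with $\sum_i\lambda_i=1$, at least $d-k$ of them equal to $0$, and at least $k$ of them equal to $\tfrac2{d+1}$.

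The heart of the proof is then an elementary counting argument ruling this out when $k<\lfloor d/2\rfloor$. Writing $a$ and $b$ for the numbers of coordinates equal to $0$ and to $\tfrac2{d+1}$, the hypotheses give $a\ge d-k$ and $b\ge k$, hence $a+b\ge d$, so among the $d+1$ coordinates at most one is strictly interior. Substituting into $\sum_i\lambda_i=1$ pins down $b$ exactly: if one interior coordinate is present then $d$ must be even and $b=d/2$, and if none is present then $d$ must be odd and $b=(d+1)/2$; in both admissible cases $a=\lceil d/2\rceil$. But then $a\ge d-k$ forces $k\ge d-\lceil d/2\rceil=\lfloor d/2\rfloor$, contradicting $k<\lfloor d/2\rfloor$. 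Hence $S_k(P)\cap S_{d-k}(-P)=\emptyset$ for every such $k$.

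I expect the main obstacle to be the careful translation of ``lies in a face of dimension at most $k$'' into the vanishing-coordinate count and, in particular, handling the degenerate case $k=0$ where $S_d(-P)$ is the full body; the final arithmetic --- how many coordinates can be $0$ or $\tfrac2{d+1}$ while the total is $1$ --- is then routine. It is worth noting that regularity of the simplex is never used: only that its centroid is the origin, which is what makes the description of $-P$ so clean.
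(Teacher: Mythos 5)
Your proof is correct, but it takes a genuinely different route from the paper's. The paper's example is a Cartesian product of equilateral triangles $T$ centered at the origin ($P = T^m$ when $d = 2m$, and $P = I \times T^m$ with $I = [-1,2]$ when $d$ is odd): any face of $P$ of dimension $k < \lfloor d/2 \rfloor$ must have at least one vertex of $T$ among its factors, and since each vertex of $T$ lies outside $-T$, such a face misses $-P$ entirely; this establishes the stronger statement $S_k(P) \cap (-P) = \emptyset$. Your centered simplex is a different construction with some advantages: it is a single example that works uniformly in every dimension (no even/odd split in the construction itself, only inside the arithmetic), the description of $-P$ by the inequalities $\lambda_i \le 2/(d+1)$ is clean and correct, your translation of skeleton membership into coordinate counts is right, and you correctly flag the degenerate case $k=0$ where $S_d(-P)$ is the whole body. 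One remark that would streamline your argument: your simplex also satisfies the paper's stronger conclusion, and proving that is shorter than your case analysis. Indeed, if $x \in S_k(P) \cap (-P)$, then $x$ has at most $k+1$ nonzero barycentric coordinates, each at most $2/(d+1)$, and they must sum to $1$; hence $2(k+1) \geq d+1$, which for integer $k$ gives $k \geq \lfloor d/2 \rfloor$. This avoids splitting on the parity of $d$ and on whether a strictly interior coordinate exists, and it never uses the condition that at least $k$ coordinates equal $2/(d+1)$.
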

\begin{proof}
First, we consider the case where $d=2m$ is even (thus, $k<m$).
Consider an equilateral triangle $T$ centered at the origin. 
Then, we observe that all three vertices of $T$ lie outside $-T$.
Let $T^{m} = T \times \dots \times T$ be the Cartesian product of $T$ in $\Rset ^{2m}$.
Then, a $k$-dimensional face of $P = T^{m}$ is the Cartesian product of $k$ edges and $m-k$ vertices of $T$.
Since $m-k > 0$ and a vertex of $T$ lies outside $-T$, the face cannot intersect $-P = (-T)^{m}$. 
If $d= 2m+1 \ge 3$ is odd, we consider $P = I \times T^{m}$, where $I = [-1, 2] $ is an interval. 
The remaining argument is analogous.
\end{proof} 

As mentioned in the introduction, repeated application of
Theorem~\ref{theorem: halving} immediately gives us the following:
\begin{proposition}
\label{prop:2power}
Let $k$ be a positive integer and let $d \le 2^k$.
Then, on the $1$-skeleton of any $d$-dimensional convex polyhedron, 
there are $2^k$ points whose barycenter is at the origin. 
\end{proposition}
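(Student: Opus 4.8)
\textbf{Proof proposal for Proposition~\ref{prop:2power}.}

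The plan is to induct on $k$, using Lemma~\ref{lem_halving} as the doubling engine. The base case $k=0$ asks for a single point on the $1$-skeleton whose barycenter is the origin; since $d \le 2^0 = 1$, the polyhedron is $1$-dimensional, the origin lies on it (indeed on its $1$-skeleton), and a single weight placed at the origin works. For the inductive step, I would assume the statement holds for $k-1$ and prove it for $k$.

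The key idea is that Lemma~\ref{lem_halving} lets us split a dimension-$d$ problem into two lower-dimensional subproblems. Given a $d$-dimensional convex polyhedron $P$ with $d \le 2^k$, apply Lemma~\ref{lem_halving} to obtain an antipodal pair realized by a point $q \in S_{\lfloor d/2 \rfloor}(P)$ and a point $-q \in S_{\lceil d/2 \rceil}(P)$, so that $q + (-q) = 0$ is the origin; more precisely, the lemma gives a common point of $S_{\lfloor d/2 \rfloor}(P)$ and $S_{\lceil d/2 \rceil}(-P)$, i.e. a point $q$ in the $\lfloor d/2 \rfloor$-skeleton of $P$ whose reflection $-q$ lies in the $\lceil d/2 \rceil$-skeleton of $P$. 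Let $F$ be the face of $P$ of dimension at most $\lfloor d/2 \rfloor$ containing $q$, and let $F'$ be the face of dimension at most $\lceil d/2 \rceil$ containing $-q$. Each of $F$ and $F'$ is itself a convex polyhedron (of dimension $\le \lceil d/2 \rceil \le 2^{k-1}$) containing, respectively, the point $q$ and the point $-q$ in its relative interior or on its boundary.

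The next step is to recenter and apply the inductive hypothesis inside each face. Translate $F$ so that $q$ becomes the origin; since $q \in F$ and $F$ has dimension at most $2^{k-1}$, the inductive hypothesis furnishes $2^{k-1}$ points on the $1$-skeleton of $F$ whose barycenter is $q$. Crucially, the $1$-skeleton of the face $F$ is contained in the $1$-skeleton of $P$, because edges and vertices of a face of $P$ are edges and vertices of $P$. Likewise, inside $F'$ we obtain $2^{k-1}$ points on $S_1(F') \subseteq S_1(P)$ with barycenter $-q$. Averaging all $2^k$ points together, the barycenter is $\tfrac12(q + (-q)) = 0$, and every point lies on $S_1(P)$, completing the induction.

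The main obstacle I anticipate is dimensional bookkeeping: I must check that the faces $F$ and $F'$ genuinely have dimension at most $2^{k-1}$ so that the inductive hypothesis applies. From $d \le 2^k$ we get $\lceil d/2 \rceil \le \lceil 2^k/2 \rceil = 2^{k-1}$, so both $\lfloor d/2 \rfloor$ and $\lceil d/2 \rceil$ are at most $2^{k-1}$, and the hypothesis applies at level $k-1$ to each face. A secondary subtlety is ensuring the origin (after recentering) actually lies on the face rather than outside it, but this is exactly what Lemma~\ref{lem_halving} guarantees, since $q$ and $-q$ are witnessed as lying on the respective skeletons. The only remaining care is the degenerate possibility that a face has dimension strictly less than expected, which only makes the inductive hypothesis easier to apply (a polyhedron of dimension $d' \le 2^{k-1}$ still satisfies the level-$(k-1)$ statement), so no genuine difficulty arises there.
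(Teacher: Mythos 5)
Your proposal is correct and follows essentially the same route as the paper: induction on $k$, using Lemma~\ref{lem_halving} to produce an antipodal pair $q$, $-q$ on the $\lfloor d/2\rfloor$- and $\lceil d/2\rceil$-skeletons, then recursing on the two faces containing these points (recentered at $q$ and $-q$) and combining the two groups of $2^{k-1}$ points. The only cosmetic difference is your base case $k=0$ versus the paper's $k=1$ (via the antipodal theorem), and your explicit checks of the dimensional bookkeeping and of $S_1(F)\subseteq S_1(P)$, which the paper leaves implicit.
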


\begin{proof}
We use induction on $k$. 
The statement is true for $k = 1$ (Theorem~\ref{theorem: antipodal}). 
It follows from Theorem~\ref{theorem: halving} 
that there are antipodal points $x \in F$ and $-x \in F'$, 
where $F$ and $F'$ are faces  
from $S_{\lfloor d/2 \rfloor}(P)$ 
and $S_{\lceil d/2 \rceil}(P)$, respectively. 
By the induction hypothesis applied to $F$ and $F'$ (translated by $-x$ and $x$), 
we have $2 ^{k - 1}$ points on the skeleton of $F$ with barycenter~$x$, 
and $2 ^{k - 1}$ points on the skeleton of $F'$ with barycenter~$-x$. 
These $2 ^k$ points together satisfy our requirement.
\end{proof}

We note that our method is constructive, and such a location of points
can be computed in polynomial time for any fixed dimension.  

Algorithmic aspects of the generalization of
Proposition~\ref{prop:2power} by Dobbins~\cite{dobbins2015point} appear to
be unexplored.

\section*{Acknowledgements}
This work was initiated at the Fields Workshop
on Discrete and Computational Geometry 
(15th Korean Workshop on Computational Geometry),
Ottawa, Canada, in August 2012,
and it was continued at 
IPDF 2012 Workshop on Algorithmics on Massive Data, held in
Sunshine Coast, Australia, August 23--27, 2012, 
supported by the University of Sydney International Program Development Funding,
a workshop in Ikaho, Japan,
and 
16th Korean Workshop on Computation Geometry in Kanazawa, Japan.
The authors are grateful to the organizers of these workshops.
They also thank John Iacono and Pat Morin for 
pointing out Proposition~\ref{prop:npcomplete},
as well as Jean-Lou~De~Carufel and Tianhao~Wang
for helpful discussions.

This work was supported by
NSF China (no.~60973026);
the Shanghai Leading Academic Discipline Project (no.~B114);
the Shanghai Committee of Science and Technology (nos.~08DZ2271800 and 09DZ2272800);
the Research Council (TRC) of the Sultanate of Oman;
NRF grant 2011-0030044 (SRC-GAIA) funded by the government of Korea;
the National Research, Development and Innovation Office, NKFIH, project KKP-133864;
the Austrian Science Fund (FWF), grant~Z~342-N31;
ERC Advanced Grant "GeoScape" (no.~882971);
the Ministry of Education and Science of the Russian Federation in the framework of MegaGrant (no.~075-15-2019-1926);
and 
the Natural Sciences and Engineering Research Council of Canada (NSERC).

\bibliographystyle{plain}
\bibliography{masscenter}

\begin{thebibliography}{10}

\bibitem{arakelian2005shaking}
Vigen~H. Arakelian and M.~R. Smith.
\newblock Shaking force and shaking moment balancing of mechanisms: A
  historical review with new examples.
\newblock {\em Journal of Mechanical Design}, 127(2):334--339, 2005.
\newblock Erratum in the same journal 127(5) 1034--1035 (2005).

\bibitem{barba2014weight}
Luis Barba, Otfried Cheong, Jean-Lou De~Carufel, Michael~Gene Dobbins, Rudolf
  Fleischer, Akitoshi Kawamura, Matias Korman, Yoshio Okamoto, J\'anos Pach,
  Yuan Tang, Takeshi Tokuyama, Sander Verdonschot, and Tianhao Wang.
\newblock Weight balancing on boundaries and skeletons.
\newblock In {\em Proceedings of the 30th Annual Symposium on Computational
  Geometry (SoCG 2014)}, pages 436--443, 2014.

\bibitem{blagojevic2019barycenters}
Pavle V.~M. Blagojevi{\'c}, Florian Frick, and G{\"u}nter~M. Ziegler.
\newblock Barycenters of polytope skeleta and counterexamples to the
  {T}opological {T}verberg {C}onjecture, via constraints.
\newblock {\em Journal of the European Mathematical Society}, 21(7):2107--2116,
  2019.

\bibitem{dobbins2015point}
Michael~Gene Dobbins.
\newblock A point in a $nd$-polytope is the barycenter of $n$ points in its
  $d$-faces.
\newblock {\em Inventiones mathematicae}, 199(1):287--292, 2015.

\bibitem{dobbins2021barycenters}
Michael~Gene Dobbins and Florian Frick.
\newblock Barycenters of points in polytope skeleta.
\newblock In Gabriel Cunningham, Mark Mixer, and Egon Schulte, editors, {\em
  Polytopes and Discrete Geometry}, volume 764 of {\em Contemporary
  Mathematics}, pages 83--88, 2021.

\bibitem{ghosh1996mathematical}
Pijush~K. Ghosh and Robert~M. Haralick.
\newblock Mathematical morphological operations of boundary-represented
  geometric objects.
\newblock {\em Journal of Mathematical Imaging and Vision}, 6:199--222, 1996.

\bibitem{gordon1955generalization}
Izrail~Isaakovich Gordon.
\newblock Generalization of {K}akutani's theorem on continuous functions given
  on a sphere.
\newblock {\em Uspekhi Matematicheskikh Nauk}, 10(1):97--99, 1955.

\bibitem{hopcroft1985movement}
John Hopcroft, Deborah Joseph, and Sue Whitesides.
\newblock On the movement of robot arms in 2-dimensional bounded regions.
\newblock {\em SIAM Journal on Computing}, 14(2):315--333, 1985.

\bibitem{kronheimer1981tripos}
Erwin~H. Kronheimer and Peter~B. Kronheimer.
\newblock The tripos problem.
\newblock {\em Journal of the London Mathematical Society}, s2-24(1):182--192,
  1981.

\bibitem{matschke2014survey}
Benjamin Matschke.
\newblock A survey on the square peg problem.
\newblock {\em Notices of the American Mathematical Society}, 61(4):346--352,
  2014.

\bibitem{meyerson1980equilateral}
Mark~D. Meyerson.
\newblock Equilateral triangles and continuous curves.
\newblock {\em Fundamenta Mathematicae}, 110(1):1--9, 1980.

\bibitem{yamabe1950continuous}
Hidehiko Yamabe and Zuiman Yujob{\^o}.
\newblock On the continuous function defined on a sphere.
\newblock {\em Osaka Mathematical Journal}, 2(1):19--22, 1950.

\end{thebibliography}
\end{document}